\title{Opinion Forming in Erdős–Rényi Random Graph and Expanders}
\titlerunning{Opinion Forming in Erdős–Rényi Random Graph and Expanders} 
\author[1]{Ahad N. Zehmakan}
\affil[1]{ETH Zurich, Switzerland\\
\texttt{abdolahad.noori@inf.ethz.ch}}
\authorrunning{Ahad N. Zehmakan} 
\keywords{majority model, random graph, expander graphs, dynamic monopoly, bootstrap percolation.}
\begin{document}
\maketitle

\begin{abstract}
Assume for a graph $G=(V,E)$ and an initial configuration, where each node is blue or red, in each discrete-time round all nodes simultaneously update their color to the most frequent color in their neighborhood and a node keeps its color in case of a tie. We study the behavior of this basic process, which is called majority model, on the Erdős–Rényi random graph $\mathcal{G}_{n,p}$ and regular expanders. First we consider the behavior of the majority model on $\mathcal{G}_{n,p}$ with an initial random configuration, where each node is blue independently with probability $p_b$ and red otherwise. It is shown that in this setting the process goes through a phase transition at the connectivity threshold, namely $\frac{\log n}{n}$. Furthermore, we say a graph $G$ is $\lambda$-expander if the second-largest absolute eigenvalue of its adjacency matrix is $\lambda$. We prove that for a $\Delta$-regular $\lambda$-expander graph if $\lambda/\Delta$ is sufficiently small, then  the majority model by starting from $(\frac{1}{2}-\delta)n$ blue nodes (for an arbitrarily small constant $\delta>0$) results in fully red configuration in sub-logarithmically many rounds. Roughly speaking, this means the majority model is an ``efficient'' and ``fast'' density classifier on regular expanders. As a by-product of our results, we show regular Ramanujan graphs are asymptotically optimally immune, that is for an $n$-node $\Delta$-regular Ramanujan graph if the initial number of blue nodes is $s\leq \beta n$, the number of blue nodes in the next round is at most $\frac{cs}{\Delta}$ for some constants $c,\beta>0$. This settles an open problem by Peleg~\cite{peleg2014immunity}. 
 
\textit{Eligible for best student paper}.
\end{abstract}
\section{Introduction}
A social network, the graph of relationships among a group of individuals, plays a fundamental role as a medium for the spread of information, ideas, and influence among its members. For example, social media such as Facebook, Twitter, and Instagram have served as a crucial tool for communication and information disseminating in today's life. Recently, studying different social behaviors like how people form their opinion regarding a new product or an election or how the information spreads through a social network have attracted a substantial amount of attention. Many different processes, from bootstrap percolation~\cite{balogh2007bootstrap} to rumor spreading~\cite{chierichetti2010rumour}, have been introduced to model this sort of social phenomena. 

A considerable amount of attention has been devoted to the study of the majority-based models, like voter model, majority bootstrap percolation, and majority model. In the \emph{majority bootstrap percolation} for a given graph and an initial configuration where each node is blue or red, in each round all blue nodes update their color to the most frequent color in their neighborhood and red nodes stay unchanged. The main motivation behind the majority bootstrap percolation is to model monotone processes like rumor spreading, where a red/blue node corresponds to an informed/uninformed individual and an informed individual will always stay informed of the rumor. However, to analyze non-monotone processes like the diffusion of two competing technologies over a social network or opinion forming in a community, the\emph{ majority model} is considered where each node updates its color to the most frequent color in its neighborhood and keeps it unchanged in case of a tie. The blue/red color, for instance, could stand for positive/negative opinion regarding a reform proposal.

Let us first fix some notations and define the majority model formally. For a graph $G=(V,E)$ and a node $v\in V$, let $N(v):=\{u\in V: \{v,u\}\in E\}$ be the \emph{neighborhood} of $v$ and $d(v):=|N(v)|$ be the \emph{degree} of $v$. Furthermore, for a set $S\subseteq V$, we define $N_S(v):=N(v) \cap S$ and $N(S):=\bigcup_{v\in S}N(v)$. For two node sets $S,S'\subset V$, define $e(S,S'):=|\{(v,u):v\in S,u\in S', \{v,u\}\in E\}|$. We also write $n$ for the number of nodes in a graph $G=(V,E)$, i.e. $|V|$.  

A \emph{configuration} is a function $\mathcal{C}:V\rightarrow\{b,r\}$, where $b$ and $r$ represent blue and red. For a set $S\subseteq V$, $\mathcal{C}|_S=a$ means $\forall v\in S$, $\mathcal{C}(v)=a$ for color $a\in\{b,r\}$. For a given initial configuration $\mathcal{C}_0$, assume $\forall\ t\geq 1$ and $v\in V$, $\mathcal{C}_t(v)$ is equal to the color that occurs most frequently in $v$'s neighborhood in $\mathcal{C}_{t-1}$, and in case of a tie $v$ keeps its current color, i.e. $\mathcal{C}_{t}(v)=\mathcal{C}_{t-1}(v)$. This deterministic process is called the \emph{majority model}. For a given initial configuration $\mathcal{C}_0$, let $B(t)$ and $R(t)$ for $t\geq 0$ denote the set of blue and red nodes in $\mathcal{C}_t$.

Since for a graph $G$ there are $2^{n}$ possible configurations and the majority model is a deterministic process, by starting from any initial configuration, the process must eventually reach a cycle of configurations. The length of the cycle and the number of rounds the process needs to reach it are respectively called the \emph{period} and the \emph{consensus time} of the process. $2^{n}$ is a trivial upper bound on both the period and the consensus time of the process. However, Goles and Olivos~\cite{goles1981comportement} provided the tight upper bound of two on the period of the process, and Poljak and Turzik~\cite{poljak1986pre} showed the consensus time is upper-bounded by $\mathcal{O}(n^2)$, which is shown to be tight up to some poly-logarithmic factor by Frischknecht, Keller, and Wattenhofer~\cite{frischknecht2013convergence}. 

The majority model has been studied on different classes of graphs, like lattice~\cite{gartner2017color,schonmann1990finite,shao2009dynamic,gartner2017biased}, infinite lattice~\cite{fontes2002stretched}, random regular graphs~\cite{gartner2018majority}, and infinite trees~\cite{kanoria2011majority}, when the initial configuration is random, meaning each node is independently blue with probability $p_b$ and red otherwise (without loss of generality, we always assume $p_b\leq 1/2$). We are interested in the behavior of the process when the underlying graph is the \emph{Erdős–Rényi random graph} $\mathcal{G}_{n,p}$, where the node set is $[n]=\{1,\cdots,n\}$ and each edge is added with probability $p$ independently. It is worth to mention that several other dynamic processes also have been studied on $\mathcal{G}_{n,p}$, for instance rumor spreading by Fountoulakis, Huber, and Panagiotou~\cite{fountoulakis2010reliable}, bootstrap percolation by Coja-Oghlan, Feige, Krivelevich, and Reichman~\cite{coja2015contagious}, and interacting particle systems by Schoenebeck and Yu~\cite{schoenebeck2018consensus}.

We prove that in the majority model with $p_b\leq \frac{1}{2}-\omega(\frac{1}{\sqrt{np}})$ on $\mathcal{G}_{n,p}$ with $(1+\epsilon)p^*\le p$ for any constant $\epsilon>0$ and $p^*=\frac{\log n}{n}$, the process gets fully red in constant number of rounds asymptotically almost surely (for an $n$-node graph $G$ we say an event happens asymptotically almost surely (a.a.s.) if it happens with probability $1-o(1)$ as $n$ tends to infinity). We also argue the tightness of this result. This explains the experimental observations from~\cite{lima2008majority}.
  
Furthermore, it is shown that in the majority model on $\mathcal{G}_{n,p}$ with $p\leq (1-\epsilon)p^*$ (for any constant $\epsilon>0$) if $p_b= o(e^{np}/n)$, then the process gets fully red but it does not for $p_b=\omega(e^{np}/n)$ a.a.s.

Putting the two aforementioned results together implies that the process exhibits a threshold behavior at $p^*$. More precisely, for $p=(1+\epsilon)p^*$, if the initial density of blue nodes is slightly less than one half, namely $\frac{1}{2}-\omega(1/\sqrt{\log n})$, then the process gets fully red, but for $p=(1-\epsilon)p^*$, $p_b$ must be very close to zero, namely smaller than $e^{n(1-\epsilon)\frac{\log n}{n}}/n=\frac{1}{n^{\epsilon}}$, to guarantee that it gets fully red a.a.s. Even though the proofs of the above statements require some effort, the main intuition behind this phase transition simply comes from the fact that $p^*$ is the connectivity threshold for $\mathcal{G}_{n,p}$, that is $\mathcal{G}_{n,p}$ is connected and disconnected a.a.s. respectively for $(1+\epsilon)p^*\le p$ and $p\le (1-\epsilon)p^*$. 
  
For $(1+\epsilon)p^*\le p$ and $p_b\le \frac{1}{2}-\omega(\frac{1}{\sqrt{np}})$ we distinguish two cases of sparse, $p\le \frac{n^{\gamma}}{n}$, and dense, $\frac{n^{\gamma}}{n}<p$ for some small constant $\gamma>0$. We argue that in the sparse case a very close neighborhood of each node includes only a constant number of cycles a.a.s., meaning it has a tree-like structure. Building on this tree-like structure, we prove that after constantly many rounds the probability of being blue for each node is so small that the union bound over all nodes yields our desired result. For the dense case, we argue in the first round the number of blue nodes decreases to $\frac{n}{c'}$ a.a.s. for a large constant $c'$. Then, relying on the high edge density of the graph we show $s\leq \frac{n}{c'}$ blue nodes can create at most $s/n^{\frac{\gamma}{2}}$ blue nodes in the next round; thus the process gets fully red in constantly many rounds. 

For $p\le (1-\epsilon)p^*$ and $p_b=\omega(e^{np}/n)$, the idea is to show that there exist sufficiently many constant-size components so that initially there is a fully blue and a fully red component a.a.s., which guarantee the coexistence of both colors. For $p_b=o(e^{np}/n)$, we argue the blue density is small enough to show that in at most two rounds all nodes are red a.a.s.
  
So far we considered the random setting, but one might approach the model from an extremal point of view, which brings up the very well-studied concept of dynamic monopoly. For a graph $G=(V,E)$ and the majority model a set $D\subseteq V$ is a \emph{dynamic monopoly}, or shortly \emph{dynamo}, when the following holds: if in some configuration all nodes in $D$ are red (similarly blue) then the process eventually gets fully red (resp. blue), regardless of the colors of the other nodes. Though the concept of dynamo had been studied before, e.g. by Balogh and Pete~\cite{balogh1998random} and Schonmann~\cite{schonmann1992behavior}, it was introduced formally by Kempe, Kleinberg, and Tardos~\cite{kempe2003maximizing} and Peleg~\cite{peleg1998size} independently and motivated from two different contexts. The minimum size of a dynamo has been extensively studied on different graph classes, from the $d$-dimensional lattice, motivated from the literature of statistical physics, by Flocchini, Lodi, Luccio, Pagli, and Santoro~\cite{flocchini2004dynamic}, Balister, Bollobás, Johnson, and Walters~\cite{balister2010random}, and Jeger and Zehmakan~\cite{jeger2018dynamic} to planar graphs by Peleg~\cite{peleg2002local}. As a notable example, although it had been conjectured by Peleg~\cite{peleg2002local} that the minimum size of a dynamo in any $n$-node graph is in $\Omega(\sqrt{n})$, surprisingly Berger~\cite{berger2001dynamic} proved for any $n\in\mathbb{N}$ there is an $n$-node graph which has a constant-size dynamo, meaning a constant number of red nodes is sufficient to make the whole graph red. We study the minimum size of a dynamo in $\mathcal{G}_{n,p}$, and prove it is larger than $(\frac{1}{2}-\frac{c}{\sqrt{np}})n$ a.a.s. for some constant $c>0$.

As we discussed, in $\mathcal{G}_{n,p}$ and above the connectivity threshold if $p_b$ is slightly less than one half then the process reaches fully red configuration and the minimum size of a dynamo is close to $n/2$ a.a.s. This raises the notorious and well-studied problem of density classification. 
For a given graph $G$, in the \emph{density classification problem}~\cite{gacs1978one} the task is to find an updating rule so that for whatever initial configuration, the process gets fully red if the number of reds is more than blues initially, and fully blue otherwise. This is a very central problem in the literature of cellular automata and distributed computing since it is a good test case to measure the power of local computations in gathering global information. This problem turned to be hard, in the sense that Land and Belew~\cite{land1995no} proved such an updating rule does not exist even when the underlying graph is a cycle. Mustafa and Pekec~\cite{mustafa2001majority,mustafa2004listen} approached the problem from a different angle and asked for which classes of graphs the majority model, which is probably one of the most natural candidates, classifies the density, and they proved that it is the case for graphs which have at least $n/2$ nodes of degree $n-1$. These hardness results however did not stop the quest for the best, although imperfect, solutions and different weaker variants of the problem have been suggested. A natural way of relaxing the problem would be to require any configuration with less than $(\frac{1}{2}-\delta)n$ blue nodes for some small $\delta>0$ results in fully red configuration. What are the graphs for which the majority model classifies the density for reasonably small values of $\delta$? To address this question, we argue that regularity and expansion are two determining factors.

Expanders are graphs which are highly connected; meaning to disconnect a large part of the graph, one has to sever many edges. A standard algebraic way of characterizing the expansion of a graph $G$ is to consider  the second-largest absolute eigenvalue of its adjacency matrix, which is denoted by $\lambda(G)$. For a $\Delta$-regular graph $G$, $\lambda(G)\le\Delta$ and smaller $\lambda(G)$ implies better expansion. We show that in the majority model on a $\Delta$-regular graph $G$, any starting configuration satisfying $|B(0)|\le(\frac{1}{2}-\delta)n$, for some fixed but arbitrarily small $\delta>0$, results in fully red configuration in sub-logarithmically many rounds if $\lambda(G)/\Delta$ is sufficiently small. In other words, if initially all nodes have the same color (which could correspond to some information) and an adversary is allowed to corrupt the color of $(\frac{1}{2}-\delta)n$ number of nodes, there is a large class of graphs for which if the nodes simply apply the majority rule, they all retrieve the original color in sub-logarithmically many rounds. Roughly speaking, the majority model is an ``efficient'' and ``fast'' density classifier on regular expanders.

In a graph $G=(V,E)$ and the majority model for two sets $S,S'\subseteq V$, we say $S$ \emph{controls} $S'$ when the following holds: if $S$ is fully blue (similarly red) in some configuration $\mathcal{C}$, $S'$ will be fully blue (resp. red) in the next configuration. The main idea of our results is that in regular expander graphs the number of edges between any two node sets $S, S'$ is almost completely determined by their cardinality. This simple fact implies the number of nodes that a set can control is proportional to its size, meaning a small set of blue nodes cannot make a big part of the graph blue. Applying this argument iteratively and some careful computations lead into the above result on regular expanders. It seems expansion is not only a sufficient condition for such a behavior but also some sort of a necessary condition since otherwise there can exist a small node set $S$ so that each node in $S$ has at least half of its neighbors inside $S$. Thus, if $S$ is initially blue, it stays blue forever, regardless of other nodes.

Motivated from fault-local mending in distributed systems, where redundant copies of data are kept and the
majority rule is applied to overcome the damage caused by failures, Peleg~\cite{peleg2014immunity} defined the concept of immunity. An $n$-node graph $G$ is \emph{$(\alpha,\beta)$-immune} if any node set of size $s\le \beta n$ can control at most $\alpha s$ nodes in the majority model. Immunity and density classification are related in the sense that for an $(\alpha,\beta)$-immune graph with $0<\alpha,\ \beta<1$, $|B(0)|\le \beta n$ results in fully red configuration in $\mathcal{O}(\log_{1/\alpha} n)$ rounds. For a $\Delta$-regular graph and some constant $\beta>0$ the best achievable $\alpha$ is $\frac{c_2}{\Delta}$ for some constant $c_2>0$ because $s$ nodes can occupy the full neighborhood of at least $\lfloor\frac{s}{\Delta}\rfloor$ arbitrary nodes. A $\Delta$-regular graph is called \emph{asymptotically optimally immune} if  it is $(\frac{c_2}{\Delta},\beta)$-immune for some constants $c_2,\beta>0$. These graphs are interesting since they prevent a small number of malicious/failed processors to take over a big fraction of the underlying graph. Peleg proved for any $\Delta>c_1$ for some constant $c_1$ there exists an asymptotically optimally immune $\Delta$-regular graph (actually he left a logarithmic gap, which was closed by G\"artner and Zehmakan~\cite{gartner2018majority}, recently). These results are existential, but one might be interested in constructing asymptotically optimally immune $\Delta$-regular graphs. For $\Delta\ge \sqrt{n}$, Peleg established explicit
construction of such graphs by using symmetric block designs. He also asked ``It would be interesting to construct asymptotically optimally immune regular graphs of degrees smaller than $\sqrt{n}$ ''. We settle this problem exploiting a large family of Cayley graphs, called Ramanujan graphs. 

In Section~\ref{random graph}, we study the behavior of the majority model on the random graph $\mathcal{G}_{n,p}$, and then in Section~\ref{expanders} we present our results regarding regular expander graphs and density classification. The uninterested reader might directly jump into Section~\ref{expanders} since the sections are supposed to stand by their own.
\section{Erdős–Rényi Random Graph}
\label{random graph}
In this section, we first study the behavior of the majority model on $\mathcal{G}_{n,p}$ with an initial random configuration (where each node is independently blue with probability $p_b$ and red otherwise) above the connectivity threshold in Theorem~\ref{theorem 1} and below it in Theorem~\ref{theorem 4}. 
As a corollary of these results it is easy to see that the process goes through a phase transition: above the connectivity threshold if $p_b$ is slightly less than $1/2$, the process gets fully red but below it the value of $p_b$ must be very close to zero to guarantee that it gets fully red a.a.s. Then in Theorem~\ref{dynamo}, we prove the minimum size of a dynamo in $\mathcal{G}_{n,p}$ is larger than $(\frac{1}{2}-\frac{c}{\sqrt{np}})n$ a.a.s. for some constant $c>0$, that is $(\frac{1}{2}-\frac{c}{\sqrt{np}})n$ blue nodes cannot make the whole graph blue no matter how they are placed in the graph.  
 
Let us state two variants of the Chernoff bound which we will use several times later.
\begin{theorem}\cite{feller2008introduction}
\label{Chernoff}
Suppose $x_1,\cdots,x_n$ are independent Bernoulli random variables taking values in $\{0,1\}$ and let $X$ denote their sum, then

(i) $\mathbb{P}[(1+\epsilon')\mathbb{E}[X]\leq X]\leq e^{-\frac{\epsilon'^2\mathbb{E}[X]}{3}}$ and $\mathbb{P}[X\leq (1-\epsilon')\mathbb{E}[X]]\leq e^{-\frac{\epsilon'^2\mathbb{E}[X]}{2}}$ for $0\leq \epsilon'\leq 1$

(ii) $\mathbb{P}[(1+\epsilon')\mathbb{E}[X]\le X]\leq e^{-\frac{\epsilon'\mathbb{E}[X]}{3}}$ for $\epsilon'\ge 1$.
\end{theorem}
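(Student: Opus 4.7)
The plan is to use the classical Chernoff moment-generating function method: apply Markov's inequality to $e^{tX}$ for a carefully chosen parameter $t$, exploit independence to factorize the moment generating function, and then optimize $t$. This is a textbook derivation, but I will sketch the main ingredients.

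First I would bound the moment generating function. Writing $p_i:=\mathbb{E}[x_i]$ and $\mu:=\mathbb{E}[X]=\sum_i p_i$, for any $t\in\mathbb{R}$ a direct computation gives $\mathbb{E}[e^{tx_i}]=1+p_i(e^t-1)\le\exp(p_i(e^t-1))$ using the elementary inequality $1+y\le e^y$. By independence of $x_1,\dots,x_n$,
\[
\mathbb{E}[e^{tX}]=\prod_{i=1}^n\mathbb{E}[e^{tx_i}]\le\exp\bigl(\mu(e^t-1)\bigr).
\]

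Next, for the upper tail with $t>0$, Markov's inequality yields
\[
\mathbb{P}[X\ge(1+\epsilon')\mu]\le e^{-t(1+\epsilon')\mu}\,\mathbb{E}[e^{tX}]\le\exp\bigl(\mu(e^t-1-t(1+\epsilon'))\bigr).
\]
Optimizing in $t$ gives $t=\log(1+\epsilon')$ and the canonical Chernoff bound $\bigl(e^{\epsilon'}/(1+\epsilon')^{1+\epsilon'}\bigr)^{\mu}$. For the lower tail in (i), I would run the same argument with $t<0$ (or equivalently apply the upper-tail bound to $\sum_i(1-x_i)$), which after optimizing produces $\bigl(e^{-\epsilon'}/(1-\epsilon')^{1-\epsilon'}\bigr)^{\mu}$.

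The remaining work is purely analytic: convert these canonical forms into the cleaner exponents stated in (i) and (ii). Define $\psi(\epsilon'):=(1+\epsilon')\log(1+\epsilon')-\epsilon'$ and $\varphi(\epsilon'):=\epsilon'-(1-\epsilon')\log(1-\epsilon')^{-1}$ style quantities; I would then verify, via Taylor expansion around $0$ and monotonicity of the appropriate derivative, the standard inequalities $\psi(\epsilon')\ge\epsilon'^2/3$ for $\epsilon'\in[0,1]$ and $\psi(\epsilon')\ge\epsilon'/3$ for $\epsilon'\ge 1$, together with the corresponding lower-tail inequality giving the denominator $2$ instead of $3$. The mildly fiddly part is that the three exponents ($\epsilon'^2/3$, $\epsilon'^2/2$, $\epsilon'/3$) arise from three different regimes and each requires its own elementary calculus check; this accounts for the asymmetry between the upper and lower tails in (i) and the switch from a quadratic to a linear rate in (ii) once $\epsilon'$ crosses $1$. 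Apart from these routine functional inequalities, there is no real obstacle, since the statement is a standard consequence of the exponential Markov bound applied to independent Bernoulli variables.
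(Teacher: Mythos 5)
The paper does not prove this statement at all; it is quoted as a standard Chernoff bound with a citation to Feller, so there is no ``paper proof'' to compare against. Your sketch is the canonical exponential-moment derivation and is correct: the MGF bound $\mathbb{E}[e^{tX}]\le\exp(\mu(e^t-1))$, the optimization $t=\log(1+\epsilon')$, and the three functional inequalities $(1+\epsilon')\log(1+\epsilon')-\epsilon'\ge\epsilon'^2/3$ on $[0,1]$, $\ge\epsilon'/3$ for $\epsilon'\ge1$, and $\epsilon'+(1-\epsilon')\log(1-\epsilon')\ge\epsilon'^2/2$ (the last via the Taylor expansion $\epsilon'^2/2+\epsilon'^3/6+\cdots$) are all true and routine, so filling in the deferred calculus poses no obstacle.
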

To prove Theorem~\ref{theorem 1}, we need Lemma~\ref{lemma 3}, which states in $\mathcal{G}_{n,p}$ the degree of each node is concentrated around its expectation. This can be proven by simply applying the Chernoff bound (for a formal proof see e.g.~\cite{janson2011random}).
\begin{lemma}
\label{lemma 3}
In $\mathcal{G}_{n,p}$ if $p\geq (1+\epsilon)\frac{\log n}{n}$ for some constant $\epsilon>0$, then for each node $v$ $\mathbb{P}[d(v)< \frac{np}{c^{\prime\prime}}]=o(\frac{1}{n})$ for some constant $c^{\prime\prime}>0$ (as a function of $\epsilon$).
\end{lemma}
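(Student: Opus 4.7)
The plan is to apply the Chernoff bound directly to the degree of a fixed vertex. For any $v$ in $\mathcal{G}_{n,p}$, each other vertex is adjacent to $v$ independently with probability $p$, so $d(v)$ is a sum of $n-1$ i.i.d.\ Bernoulli$(p)$ indicators with mean $\mu := (n-1)p$. Under the hypothesis $p \geq (1+\epsilon)\log n/n$, for all $n$ sufficiently large we have $\mu \geq (1+\epsilon/2)\log n$.

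First I would rewrite the deviation $d(v) < np/c''$ in multiplicative form $d(v) < (1-\delta)\mu$ with $\delta := 1 - n/((n-1)c'')$; for any constant $c'' > 1$ and $n$ large, $\delta \in (0,1)$, and $\delta$ may be pushed arbitrarily close to $1$ by taking $c''$ large. Applying the sharp lower-tail Chernoff--Hoeffding bound then yields
\[
\mathbb{P}\!\left[d(v) < \frac{np}{c''}\right] \leq \exp\!\left(-\mu\,\psi(\delta)\right), \qquad \psi(\delta) := (1-\delta)\log(1-\delta)+\delta,
\]
and since $\psi(\delta)$ increases to $1$ as $\delta \to 1$, choosing $c'' = c''(\epsilon)$ sufficiently large makes $\mu\,\psi(\delta) \geq (1+\epsilon/4)\log n$. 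This bounds the probability by $n^{-(1+\epsilon/4)} = o(1/n)$, as required.

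The main subtle point, and essentially the only place care is needed, is that the loose multiplicative form $\exp(-\epsilon'^2\mu/2)$ recorded as Theorem~\ref{Chernoff}(i) only buys an exponent of at most roughly $(1+\epsilon)\log n/2$, which is not quite $o(1/n)$ when $\epsilon \in (0,1]$. Using the sharp Chernoff--Hoeffding form above (as available in the textbook cited with the lemma) is therefore essential to cover the full range $\epsilon > 0$; with that in hand the bound $n^{-1-\Omega(\epsilon)}$ follows immediately and the lemma is established.
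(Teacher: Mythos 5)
Your proof is correct and follows exactly the route the paper indicates: the paper gives no actual argument, merely asserting the lemma ``can be proven by simply applying the Chernoff bound'' and deferring to a textbook, whereas you carry the computation out. Your observation that the loose form in Theorem~\ref{Chernoff}(i) only buys an exponent of roughly $(1+\epsilon)(\log n)/2$, and hence fails to give $o(1/n)$ when $\epsilon\le 1$, is accurate and is precisely the detail the paper glosses over; the sharp lower-tail bound $\exp\left(-\mu\left[(1-\delta)\log(1-\delta)+\delta\right]\right)$ with $\delta$ pushed toward $1$ (equivalently $c''$ large, as a function of $\epsilon$) is indeed what is needed, and your choice of parameters makes this work for every constant $\epsilon>0$.
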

The main idea behind the proof of Theorem~\ref{theorem 1} is to apply the fact that the edges of each node are distributed randomly all over the graph.
\begin{theorem}
\label{theorem 1}
In the majority model with $p_b\leq \frac{1}{2}-\omega(\frac{1}{\sqrt{np}})$ on $\mathcal{G}_{n,p}$ with $p\geq (1+\epsilon)\frac{\log n}{n}$ for $\epsilon>0$, the process gets fully red in constant number of rounds a.a.s.
\end{theorem}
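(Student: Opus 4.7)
The plan is to split $p$ into a sparse regime $(1+\epsilon)\frac{\log n}{n} \leq p \leq n^{\gamma-1}$ and a dense regime $p > n^{\gamma-1}$ for some small constant $\gamma > 0$, and in each case to show that after a constant number of rounds the probability that any fixed node is blue is $o(1/n)$, so that a union bound yields the theorem.

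For the sparse regime I would first invoke the well-known fact that, for $p = O(n^{\gamma-1})$, a.a.s.\ the radius-$k$ neighborhood of every node contains only $O(1)$ cycles for each fixed $k$; hence up to depth $k$ the graph around a node $v$ can be coupled with a Galton--Watson tree whose offspring distribution has mean $\approx np$. Writing $q_t$ for the probability that the root of this tree is blue after $t$ rounds (so $q_0 = p_b$), and using Lemma~\ref{lemma 3} to replace the random degree by its lower bound $np/c''$, Chernoff yields the recursion
\[
q_{t+1} \leq \exp\!\Bigl(-\tfrac{2np}{3}\bigl(\tfrac{1}{2} - q_t\bigr)^2\Bigr).
\]
Since $\tfrac{1}{2} - q_0 = \omega(1/\sqrt{np})$, the first round already pushes $q_1$ to a constant below $\tfrac{1}{2}$; from then on $\tfrac{1}{2} - q_t = \Omega(1)$ and $q_t$ decays like $e^{-\Omega(np)} \leq n^{-(1+\epsilon/2)}$ after $O(1)$ further rounds. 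A union bound over the $n$ nodes closes the sparse case.

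For the dense regime I would work in two phases. Chernoff applied to $\mathrm{Bin}(d(v), p_b)$ together with Lemma~\ref{lemma 3} gives $\mathbb{P}[v \in B(1)] \leq 1/c'$ for an arbitrarily large constant $c'$, so standard concentration delivers $|B(1)| \leq n/c'$ a.a.s. I would then prove a \emph{structural} property of $\mathcal{G}_{n,p}$ that holds a.a.s.\ about the graph alone: for every set $S \subseteq V$ with $|S| = s \leq n/c'$, the number of vertices with more than $d(v)/2$ neighbors inside $S$ is at most $s/n^{\gamma/2}$. Here for fixed $v$ and $S$, $|N_S(v)| \sim \mathrm{Bin}(s,p)$ has expectation at most $np/c'$, and the Chernoff tail $\mathbb{P}[|N_S(v)| \geq np/2] \leq (2e/c')^{np/2}$ is small enough to absorb the $\binom{n}{s}\binom{n}{s/n^{\gamma/2}}$ combinatorial cost once $c'$ is large and $np \geq n^{\gamma}$. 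Iterating this property, $|B(t)|$ shrinks by a factor $n^{\gamma/2}$ per round, so the process is fully red in $O(1/\gamma) = O(1)$ rounds.

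The main obstacle I expect is the dependency between the graph and the dynamics past the first round in the dense case: once the edges are exposed to determine $B(1)$, they are the very same edges that determine $B(2), B(3), \dots$, so later rounds cannot be treated as fresh randomness. The remedy is to package all of the graph randomness into a single a.a.s.\ structural event (the containment bound above for every small set $S$), after which the remainder of the argument becomes purely combinatorial and deterministic. A secondary technical point is that the tree-like coupling in the sparse regime must be pushed to enough depths for the recursion to close while keeping the coupling-failure probability at $o(1/n)$; this is what ultimately constrains $\gamma$ to be sufficiently small.
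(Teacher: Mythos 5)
Your decomposition and both per-regime strategies are essentially those of the paper: the same split at $p=n^{\gamma-1}$, the same two-phase dense argument (one round down to $n/c'$ blue nodes, then an a.a.s.\ structural statement that every set of size $s\le n/c'$ controls at most $s/n^{\gamma/2}$ nodes, proved by Chernoff plus a union bound over pairs of sets and iterated deterministically), and the same reliance on local tree-likeness in the sparse regime. Your diagnosis of the dense-case dependency problem and its remedy (package the graph randomness into a single structural event, then argue deterministically) is exactly how the paper proceeds; the paper phrases the structural event via the edge counts $e(S',S)$ versus $e(S',V\setminus S)$ rather than per-vertex, but the two formulations are equivalent. (One small point: the per-node bound $\mathbb{P}[v\in B(1)]\le 1/c'$ should really be the $o(1)$ bound that follows from $\delta=\omega(1/\sqrt{np})$, after which Markov's inequality alone gives $|B(1)|\le n/c'$ a.a.s.; genuine concentration of $|B(1)|$ is delicate because the events are correlated, and it is not needed.)

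There is, however, a real quantitative gap in your sparse-case recursion. The bound $q_{t+1}\le \exp\bigl(-\tfrac{2np}{3}(\tfrac12-q_t)^2\bigr)$ can never beat $\exp(-np/6)$, since $(\tfrac12-q_t)^2\le \tfrac14$; with $np=(1+\epsilon)\log n$ this floor is $n^{-(1+\epsilon)/6}$, which is not $o(1/n)$ unless $\epsilon>5$. So the claimed step ``$q_t$ decays like $e^{-\Omega(np)}\le n^{-(1+\epsilon/2)}$'' does not follow from the stated recursion, and the final union bound over $n$ nodes does not close for small $\epsilon$, no matter how many constant rounds you run. The paper's fix is to change tail bounds after one round: once the per-neighbor blue probability is an arbitrarily small constant $\delta'$, it uses the crude binomial tail $\mathbb{P}[\mathrm{Bin}(d,\delta')\ge d/2]\le 2^{d}\delta'^{d/2}=(2\sqrt{\delta'})^{d}$, whose exponential rate can be made larger than any prescribed multiple of $d$ by shrinking $\delta'$; this yields $o(1/n)$ after only two rounds. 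A second point your write-up glosses over is the feedback through the root: the round-1 colors of $v$'s neighbors all depend on $\mathcal{C}_0(v)$ (and on the few shared neighbors coming from the $O(1)$ short cycles through $v$), so they are not independent even under a tree coupling. The paper handles this by introducing the notion of a neighbor being ``almost blue'' (its blueness computed ignoring $v$'s initial color) and by discarding the at most four neighbors lying on a triangle or four-cycle through $v$; your Galton--Watson argument needs an analogous device before the number of blue children can be treated as a sum of independent indicators. Both issues are fixable, but as written the sparse case does not go through.
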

\begin{proof}
We divide the proof into two parts of dense, which is $p\ge \frac{n^{\gamma}}{n}$, and sparse, which is $p<\frac{n^{\gamma}}{n}$ for a sufficiently small constant $\gamma>0$. 

\textbf{Dense case:}
We first show that in one round a.a.s. the number of blue nodes decreases to $n/c'$ for an arbitrarily large constant $c'$. Then, we prove $n/c'$ blue nodes disappear in constant number of rounds, no matter how they are placed in the graph.

We argue that for an arbitrary node $v$, $\mathbb{P}[\mathcal{C}_1(v)=b]=o(1)$, which implies the expected number of blue nodes in $\mathcal{C}_1$ is equal to $o(n)$. By applying Markov's inequality~\cite{feller2008introduction}, the number of blue nodes in $\mathcal{C}_1$ is less than $n/c'$ a.a.s. for an arbitrarily large constant $c'$. To compute the probability that node $v$ is blue in $\mathcal{C}_1$, consider an arbitrary labeling $u_1,\cdots,u_{d(v)}$ of $v$'s neighbors and define Bernoulli random variable $x_i$ for $1\leq i\leq d(v)$ to be 1 if and only if $\mathcal{C}_0(u_i)=r$. Assume random variable $d_r(v)$ denotes the number of red nodes in $v$'s neighborhood in $\mathcal{C}_0$; clearly, $\mathbb{E}[d_r(v)]=\sum_{i=1}^{d(v)}x_i=d(v)(1-p_b)$. Let $p_b=1/2-\delta$ for $\delta=\omega(1/\sqrt{np})$ then by applying the Chernoff bound (Theorem~\ref{Chernoff} (i)) we have
\begin{align*}
&\mathbb{P}[\mathcal{C}_1(v)=b]\ \leq\ \mathbb{P}[d_r(v)\leq d(v)/2]\leq\ \mathbb{P}[d_r(v)\leq (1-\delta)(\frac{1}{2}+\delta)d(v)]\ = \\
&\mathbb{P}[d_r(v)\leq (1-\delta)\mathbb{E}[d_r(v)]]
\ \leq\ e^{-\frac{\delta^2(1/2+\delta)d(v)}{2}}.
\end{align*}
Thus, for some positive constant $c^{\prime\prime}$, we have $\mathbb{P}[\mathcal{C}_1(v)=b|d(v)\geq \frac{np}{c^{\prime\prime}}]\leq e^{-\frac{\delta^2(1/2+\delta)np}{2c^{\prime\prime}}}=e^{-\omega(1)}$, where we used $\delta=\omega(1/\sqrt{np})$. Now, by applying Lemma~\ref{lemma 3},
\begin{align*}
&\mathbb{P}[\mathcal{C}_1(v)=b]=\mathbb{P}[\mathcal{C}_1(v)=b|d(v)\geq \frac{np}{c^{\prime\prime}}]\cdot \mathbb{P}[d(v)\geq  \frac{np}{c^{\prime\prime}}]+\\
&\mathbb{P}[\mathcal{C}_1(v)=b|d(v)< \frac{np}{c^{\prime\prime}}]\cdot \mathbb{P}[d(v)< \frac{np}{c^{\prime\prime}}]\leq e^{-\omega(1)}\cdot 1+1\cdot o(1)=o(1).
\end{align*}

Now, we prove any non-empty node set of size $s\leq n/c'$ controls at most $s/n^{\frac{\gamma}{2}}$ nodes a.a.s. This implies by starting from $n/c'$ blue nodes (regardless of how they are placed in the graph) the process gets fully red after at most $2/\gamma$ rounds (notice that $2/\gamma$ is a constant). Let $S$ be a set of size $s\leq n/c'$ and $S'$ be a set of size $s'=s/n^{\gamma/2}$. Since $\mathbb{E}[e(S',V\setminus S)]=s'(n-s)p$, by applying the Chernoff bound (Theorem~\ref{Chernoff} (i)) and using $p\ge \frac{n^{\gamma}}{n}$, $n-s\ge n/2$, and $s'=s/n^{\gamma/2}$ we have
\begin{equation}
\label{eq 1}
\mathbb{P}[e(S',V\setminus S)\le (1-\frac{1}{2})\mathbb{E}[e(S',V\setminus S)]]\leq e^{-\frac{\mathbb{E}[e(S',V\setminus S)]}{8}}= e^{-\frac{s'(n-s)p}{8}}\le e^{-\Theta(sn^{\frac{\gamma}{2}})}
\end{equation}

Similarly, since $\mathbb{E}[e(S',S)]=s'sp$ again by applying the Chernoff bound (Theorem~\ref{Chernoff} (ii))
\begin{equation}
\label{eq 2}
\mathbb{P}[e(S',S)\ge (1+(\frac{n}{4s}-1))\mathbb{E}[e(S',S)]]\leq e^{-(\frac{n}{4s}-1)\frac{\mathbb{E}[e(S',S)]}{3}}= e^{-(\frac{n}{4s}-1)\frac{s'sp}{3}}\le e^{-\Theta(sn^{\frac{\gamma}{2}})}
\end{equation}

Clearly, $\mathbb{P}[S\ \textrm{controls}\ S']\leq \mathbb{P}[e(S',V\setminus S)\le e(S',S)]$ because if $e(S',V\setminus S)> e(S',S)$ then there is at least one node in $S'$ which has more than half of its neighbors in $V\setminus S$. Furthermore, $(1+(\frac{n}{4s}-1))\mathbb{E}[e(S',S)]=\frac{n}{4s}s'sp=\frac{n}{4}s'p$ and by using $(n-s)\ge n/2$ we have $(1-\frac{1}{2})\mathbb{E}[e(S',V\setminus S)]=\frac{1}{2}s'(n-s)p\ge \frac{n}{4}s'p$. Thus by Equations~\ref{eq 1} and \ref{eq 2}, $\mathbb{P}[S\ \textrm{controls}\ S']\le 2e^{-\Theta(sn^{\gamma/2})}=e^{-\Theta(sn^{\gamma/2})}$ since $s\geq 1$.

By the union bound, the probability that there exits a set $S$ of size $s\le n/c'$ which controls a set of size $s/n^{\gamma/2}$ is bounded by
\[
\sum_{s=1}^{n/c'}{n \choose s}{n \choose s/n^{\gamma/2}} e^{-\Theta(sn^{\gamma/2})}\le \sum_{s=1}^{n/c'}n^{2s}e^{-\Theta(sn^{\gamma/2})}\leq \sum_{s=1}^{n/c'}(n^2e^{-\Theta(n^{\gamma/2})})^s.
\]
$(n^2e^{-\Theta(n^{\gamma/2})})^s$ is maximized for $s=1$ since $n^2e^{-\Theta(n^{\gamma/2})}<1$. Thus, the summation is upper-bounded by $\frac{n}{c'}n^2e^{-\Theta(n^{\gamma/2})}=o(1)$ which proves our claim.

\textbf{Sparse case:}
Let us first present the following proposition, which roughly speaking states that for small values of $p$, the close neighborhood of each node looks like a tree.

\textbf{Proposition 1}: In $\mathcal{G}_{n,p}$ with $p< \frac{n^{\gamma}}{n}$ for some small constant $\gamma>0$, a.a.s. there is no node which is in two different cycles of size 3 or 4.

To prove Proposition 1, it suffices to show a.a.s. there exits no subgraph with $4\leq k\leq 7$ nodes and $k+1$ edges. By the union bound, the probability of having such a subgraph is upper-bounded by $\sum_{k=4}^{7}{n \choose k}{k(k-1)/2 \choose k+1}p^{k+1}\leq \sum_{k=4}^{7}\Theta(n^k) \frac{n^{(k+1)\gamma}}{n^{k+1}}=o(1)$, where in the last step we used the fact that $\gamma$ is a sufficiently small constant (for instance $\gamma <1/8$). This finishes the proof of Proposition 1.

Now building on this tree-like structure and Lemma~\ref{lemma 3}, we prove the probability that an arbitrary node is blue after two rounds of the process is so small that the union bound over all nodes implies the process is fully red a.a.s.
Let $v$ be an arbitrary node and label its neighbors from $u_1$ to $u_{d(v)}$. We want to upper-bound $\mathbb{P}[\mathcal{C}_2(v)=b]$. For $1\le i\le d(v)$ let $u_{i}^{1},\cdots ,u_{i}^{d(u_i)-1}$ be the neighbors of $u_i$ except $v$. Define random variable $X_i$ to be the number of red nodes among $u_{i}^{1},\cdots, u_{i}^{d(u_i)-1}$ in $\mathcal{C}_0$. We say node $u_i$ is \emph{almost blue} in $\mathcal{C}_1$ if $X_i\le \frac{d(u_i)}{2}$ (notice if a node is blue in $\mathcal{C}_1$, it is also almost blue, but not necessarily the other way around).
Now, we bound $\mathbb{P}[X_i\le \frac{d(u_i)}{2}]$, which is the probability that $u_i$ is almost blue. Since $\mathbb{E}[X_i]=(d(u_i)-1)(1-p_b)$ for $p_b=\frac{1}{2}-\delta$ and $\delta=\omega(\frac{1}{\sqrt{np}})$, by applying the Chernoff bound (Theorem~\ref{Chernoff} (i)) we have
\[
\mathbb{P}[X_i\leq \frac{d(u_i)}{2}]\leq \mathbb{P}[X_i\le (1-\delta)(\frac{1}{2}+\delta)(d(u_i)-1)]=\mathbb{P}[X_i\le (1-\delta)\mathbb{E}[X_i]]\le e^{-\frac{\delta^2(1/2+\delta)(d(u_i)-1)}{2}}.
\]
Thus for any large constant $c^{\prime\prime}$, $\mathbb{P}[X_i\leq \frac{d(u_i)}{2}|d(u_i)\geq \frac{np}{c^{\prime\prime}}]\leq e^{-\frac{\delta^2(1/2+\delta)(\frac{np}{c^{\prime\prime}}-1)}{2}}=o(1)$ by $\delta=\omega(\frac{1}{\sqrt{np}})$. Now by applying Lemma~\ref{lemma 3}, we have
\begin{align*}
&p_i:=\mathbb{P}[X_i\le \frac{d(u_i)}{2}]=\mathbb{P}[X_i\le \frac{d(u_i)}{2}|d(u_i)\ge \frac{np}{c^{\prime\prime}}]\cdot\mathbb{P}[d(u_i)\geq \frac{np}{c^{\prime\prime}}]+\\
&\mathbb{P}[X_i\le \frac{d(u_i)}{2}|d(u_i)<\frac{np}{c^{\prime\prime}}]\cdot\mathbb{P}[d(u_i)<\frac{np}{c^{\prime\prime}}]\le o(1)\cdot 1+1\cdot o(1)\le \delta'
\end{align*}
for an arbitrarily small constant $\delta'>0$.

Now, we bound the probability $\mathbb{P}[\mathcal{C}_2(v)=b]$. Based on Proposition 1, a.a.s. every node, including $v$, is in at most one cycle of length three, say with $u_1$ and $u_2$, and in at most one cycle of length four, say with $u_3$ and $u_4$, and other $u_i$s share no neighbor except $v$ (see Figure~\ref{fig 1}). Let $Y$ denote the number of nodes among $u_{5},\cdots, u_{d(v)}$ which are almost blue in $\mathcal{C}_1$. Then, $\mathbb{P}[\mathcal{C}_2(v)=b]\le \mathbb{P}[Y\geq \frac{d(v)}{2}-4]$ because for $u_i$ to be blue in $\mathcal{C}_1$ it must be almost blue in $\mathcal{C}_1$ by definition and for $v$ to be blue in $\mathcal{C}_2$ it needs at least $\frac{d(v)}{2}-4$ blue nodes among $u_{5},\cdots, u_{d(v)}$. Notice that being almost blue and being blue are pretty much the same except being almost blue is not a function of the color of node $v$ (we some sort of assume node $v$ is blue in $\mathcal{C}_0$ and still the impact of this assumption is small enough to let us get our desired tail bound). This gives us the independence among $p_i$s for $4\leq i\leq d(u_i)-1$ (which we apply in the next step) because the only neighbor they share is $v$. Since we upper-bounded $p_i$ by $\delta'$,
\[
\mathbb{P}[\mathcal{C}_2(v)=b]\leq \mathbb{P}[Y\geq \frac{d(v)}{2}-4]=\sum_{j=\frac{d(v)}{2}-4}^{d(v)-4}{d(v)-4 \choose j}\delta'^{j}(1-\delta')^{d(v)-4-j}\le 2^{d(v)} \delta'^{\frac{d(v)}{2}-4}
\]
which is equal to $\frac{(2\sqrt{\delta'})^{d(v)}}{\delta'^4}$. Thus, $\mathbb{P}[\mathcal{C}_2(v)=b|d(v)\geq \frac{np}{c^{\prime\prime}}]\leq \frac{(2\sqrt{\delta'})^{\frac{np}{c^{\prime\prime}}}}{\delta'^4}$ which is less than $e^{-2np}$ by selecting $\delta'$ sufficiently small. Furthermore, $e^{-2np}=o(\frac{1}{n})$  by $p\ge (1+\epsilon)\frac{\log n}{n}$.  Now by Lemma~\ref{lemma 3},
\begin{align*}
&\mathbb{P}[\mathcal{C}_2(v)=b]=\mathbb{P}[\mathcal{C}_2(v)=b|d(v)\ge \frac{np}{c^{\prime\prime}}]\cdot\mathbb{P}[d(v)\geq \frac{np}{c^{\prime\prime}}]+\\
&\mathbb{P}[\mathcal{C}_2(v)=b|d(v)<\frac{np}{c^{\prime\prime}}]\cdot\mathbb{P}[d(v)<\frac{np}{c^{\prime\prime}}]\le o(\frac{1}{n})\cdot 1+1\cdot o(\frac{1}{n})=o(\frac{1}{n}).
\end{align*}
The union bound implies a.a.s. there is no blue node in $\mathcal{C}_2$. 
\begin{figure}[t]
\centering
\includegraphics{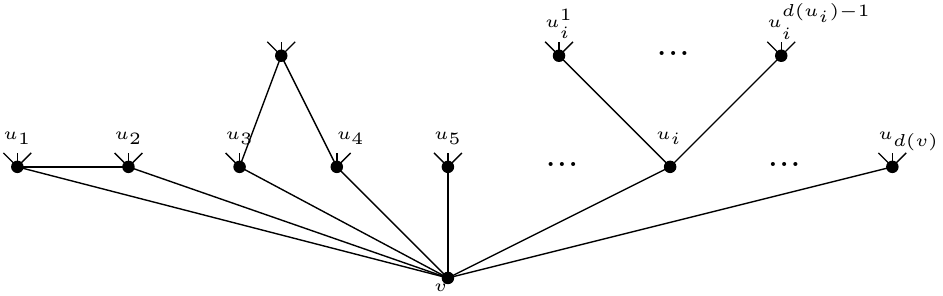}
\caption{\centering The neighborhood of node $v$.\label{fig 1}}
\end{figure} 
\end{proof}
Regarding the tightness of the result of Theorem~\ref{theorem 1}, notice that it does not hold if we replace $\omega(\frac{1}{\sqrt{np}})$ with $\frac{c}{\sqrt{np}}$ for any constant $c$. For $p=1$, which corresponds to the complete graph, if we color each node blue independently with probability $p_b=\frac{1}{2}-\frac{c}{\sqrt{n}}$ and red otherwise for some constant $c>0$, then by Central Limit Theorem~\cite{feller2008introduction} the probability that more than half of the nodes are blue is a positive constant. This implies the process gets fully blue after one round with some positive constant probability.
\begin{theorem}
\label{theorem 4}
In the majority model on $\mathcal{G}_{n,p}$ with $p\leq (1-\epsilon)\frac{\log n}{n}$ for $\epsilon>0$, a.a.s.

(i) $p_b=\omega(e^{np}/n)$ results in the coexistence of both colors

(ii) $p_b=o(e^{np}/n)$ results in fully red configuration.
\end{theorem}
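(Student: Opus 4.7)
For part~(i), the plan exploits that $p \le (1-\epsilon)\log n/n$ lies below the connectivity threshold of $\mathcal{G}_{n,p}$, so a.a.s.\ the graph has many isolated vertices; the expected number is $(1+o(1))\,n\,e^{-np} = \Theta(n^{\epsilon})$. An isolated vertex has an empty neighborhood, which is a trivial tie, so it keeps its initial color forever. Let $X_b$ denote the number of isolated blue vertices in $\mathcal{C}_0$. Then $\mathbb{E}[X_b] = n\,p_b\,(1-p)^{n-1} = \omega(1)$ under $p_b = \omega(e^{np}/n)$, and the correlation between the indicators ``$v$ is blue and isolated'' for two distinct $v$ reduces to the absence of a single edge; a direct calculation shows $\mathrm{Var}(X_b) = (1+o(1))\,\mathbb{E}[X_b]$, so Chebyshev's inequality gives $X_b \ge 1$ a.a.s. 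The identical argument with colors swapped, using $1 - p_b \ge 1/2$, yields at least one red isolated vertex a.a.s. Both colors therefore persist forever.

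For part~(ii), I aim to show $B(2) = \emptyset$ a.a.s.\ under $p_b = o(e^{np}/n)$. First, the expected number of isolated blue vertices in $\mathcal{C}_0$ is $n\,p_b\,(1-p)^{n-1} = o(1)$, so a.a.s.\ no blue vertex is isolated. The remainder of the argument bounds $\mathbb{P}[\mathcal{C}_2(v) = b]$ uniformly in $v$ and closes by a union bound over the $n$ vertices.

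For round~1, given $d(v) = d$, we have $\mathbb{P}[\mathcal{C}_1(v) = b \mid d(v) = d] \le \mathbb{P}[\mathrm{Bin}(d, p_b) \ge d/2] \le (2\sqrt{p_b})^{d}$ via $\binom{d}{\lceil d/2 \rceil} \le 2^{d}$. Averaging against the degree distribution shows that vertices of moderately large degree are red in $\mathcal{C}_1$ with overwhelming probability; the few blue vertices that survive into $\mathcal{C}_1$ are concentrated on low-degree vertices. For round~2, the plan is to adapt the sparse-case machinery from the proof of Theorem~\ref{theorem 1}---in particular Proposition~1, which applies \emph{a~fortiori} since $p \le (1-\epsilon)\log n/n \ll n^{\gamma}/n$. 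The 2-neighborhood of any $v$ is a.a.s.\ tree-like, making the events ``$u \in N(v)$ is blue in $\mathcal{C}_1$'' essentially independent; combined with the single-vertex round-$1$ bound one should obtain $\mathbb{P}[\mathcal{C}_2(v) = b] = o(1/n)$, which the union bound closes.

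The principal obstacle is the abundance of very-low-degree vertices in the disconnected regime, which rules out any one-round argument: a degree-$1$ vertex whose unique neighbor happens to be blue in $\mathcal{C}_0$ flips to blue in $\mathcal{C}_1$. The two-round setup is what saves the day---by $\mathcal{C}_2$ the neighbor has, with high probability, reverted to red (since it itself has a largely red round-$0$ neighborhood), dragging the degree-$1$ vertex back to red.
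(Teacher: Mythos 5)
Part~(i) of your proposal is correct and follows the paper's own route: a second-moment argument on the number of isolated blue (and red) vertices plus Chebyshev. (Minor quibble: the paper only needs and only proves $\mathrm{Var}(X)=o(\mathbb{E}[X]^2)$, not $\mathrm{Var}(X)=(1+o(1))\mathbb{E}[X]$; the cross-term $\mathbb{E}[X]^2\bigl((1-\tfrac1n)\tfrac{1}{1-p}-1\bigr)$ need not be $o(\mathbb{E}[X])$, but this does not affect the conclusion.)

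Part~(ii) has a genuine gap. The ``sparse-case machinery'' of Theorem~\ref{theorem 1} that you propose to adapt consists of Proposition~1 \emph{together with} Lemma~\ref{lemma 3} and a Chernoff bound of the form $(2\sqrt{\delta'})^{d(v)}$: to turn that into $o(1/n)$ one needs $d(v)=\Omega(np)=\Omega(\log n)$ for every vertex, which is exactly what fails below the connectivity threshold (there are $\Theta(n^{\epsilon})$ isolated vertices and many vertices of each constant degree). So the tree-likeness of Proposition~1 is not the binding issue; the degree lower bound is, and you correctly identify low-degree vertices as the obstacle but do not resolve it. Your resolution --- ``by $\mathcal{C}_2$ the neighbor has, with high probability, reverted to red'' --- is a per-vertex statement of at best constant confidence and cannot be closed by a union bound over the $\Theta(n\cdot npe^{-np})$ leaves; moreover it overlooks the harder failure mode, namely a non-leaf $u$ that is blue in $\mathcal{C}_0$ and has at least two leaf neighbors: those leaves copy $u$'s initial color and are blue in $\mathcal{C}_1$, and can then outvote $u$'s (red) non-leaf neighbors, making $u$ blue in $\mathcal{C}_2$. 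The paper's proof hinges on a structural reduction your proposal lacks: first a first-moment sum over all degrees $d\neq 1$ showing that a.a.s.\ \emph{every} blue vertex of $\mathcal{C}_1$ is a leaf (blue leaves in $\mathcal{C}_1$ genuinely can exist, as the paper's footnote notes), and then the observation that a leaf is blue in $\mathcal{C}_1$ iff its unique neighbor was blue in $\mathcal{C}_0$, so that $\mathcal{C}_2(u)=b$ forces $\mathcal{C}_0(u)=b$ together with either $l(u)\ge 2$ or $u$ lying on an isolated edge --- events whose expected number is $o(1)$. Without this case analysis the required uniform bound $\mathbb{P}[\mathcal{C}_2(v)=b]=o(1/n)$ is not established.
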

\begin{proof}
Proof of part (i): Notice that a blue/red isolated node never changes its color in majority model. Thus, it suffices to show for $P_b=\omega(e^{np}/n)$, a.a.s. there is a blue and a red isolated node in the initial configuration. We discuss the blue case and the proof carries on analogously for red. 

Let random variable $X$ denote the number of blue isolated nodes in $\mathcal{C}_0$. Consider an arbitrary labeling $v_1,\cdots,v_n$ on the nodes and define the Bernoulli random variable $x_i$, for $1\leq i\leq n$, to be one if and only if node $v_i$ is isolated and blue in $\mathcal{C}_0$. Clearly, $X=\sum_{i=1}^{n}x_i$ and $\mathbb{P}[x_i=1]=p_b(1-p)^{n-1}$. Thus, by linearity of expectation $\mathbb{E}[X]=np_b(1-p)^{n-1}$. By applying the estimate $1-x\geq e^{-x-x^2}$ for $0\leq x\leq 1/2$, plugging in $p_b=\omega(e^np/n)$, and using the fact that $e^{np^2}\le e^{\frac{\log^2n}{n}}\leq e$, we have $\mathbb{E}[X]\geq n\ \omega (\frac{e^{np}}{n})\ e^{-np-np^2}=\omega(1)$.
Now, we argue that $\mathrm{Var}(X)=o(\mathbb{E}[X]^2)$, which then simply by applying Chebychev's inequality~\cite{feller2008introduction} implies $\mathbb{P}[X=0]\leq\mathrm{Var}(X)/\mathbb{E}[X]^2=o(1)$. Therefore, a.a.s. there exist a blue and a red isolated node in $\mathcal{C}_0$ which result in the coexistence of both colors.
\begin{align*}
&\mathrm{Var}(X)=\mathbb{E}[X^2]-\mathbb{E}[X]^2=\sum_{1 \leq i,j \leq n}\mathbb{E}[x_i \cdot x_j]-\mathbb{E}[X]^2=\\ &\sum_{i=1}^{n}\mathbb{E}[x_{i}^{2}] +\sum_{1\leq i\ne j \leq n}\mathbb{E}[x_i\cdot x_j]-\mathbb{E}[X]^2=
\mathbb{E}[X]+\sum_{1\leq i\ne j \leq n}\mathbb{P}[x_i=1\wedge x_j=1]-\mathbb{E}[X]^2=\\
&\mathbb{E}[X]+n(n-1)(1-p)^{2n-3}p_b^2-\mathbb{E}[X]^2=\mathbb{E}[X]+\mathbb{E}[X]^2((1-\frac{1}{n})\frac{1}{1-p}-1).
\end{align*}
Since $\mathbb{E}[X]=\omega(1)$, we have $\mathbb{E}[X]=o(\mathbb{E}[X]^2)$. Furthermore by using $p=o(1)$ we have 
\[
(1-\frac{1}{n})\frac{1}{1-p}-1=\frac{p}{1-p}-\frac{1}{n}\cdot \frac{1}{1-p}=\frac{pn-1}{n(1-p)}=o(1). 
\]
Putting both together we thus conclude that $\mathrm{Var}[X]=o(\mathbb{E}[X]^2)$.

Proof of part (ii): We want to show that in the majority model with $p_b=o(e^{np}/n)$ on $\mathcal{G}_{n,p}$ for $p\leq (1-\epsilon)\frac{\log n}{n}$ the process a.a.s. gets fully red. We first prove that there is no blue non-leaf node in configuration $\mathcal{C}_1$ a.a.s. (we handle the leaf nodes later). Let $v$ be an arbitrary node.
\begin{align*}
&\mathbb{P}[\mathcal{C}_1(v)=b\wedge d(v)\ne 1]=\mathbb{P}[\mathcal{C}_1(v)=b\wedge d(v)=0]+\sum_{d=2}^{n}\mathbb{P}[\mathcal{C}_1(v)=b\wedge d(v)=d]=\\
&\mathbb{P}[\mathcal{C}_1(v)=b|d(v)=0]\mathbb{P}[d(v)=0]+\sum_{d=2}^{n}\mathbb{P}[\mathcal{C}_1(v)=b|d(v)=d]\mathbb{P}[d(v)=d]\le\\
& p_b(1-p)^{n-1}+\sum_{d=2}^{n}{d+1 \choose \lfloor d/2\rfloor+1}p_b^{\lfloor d/2\rfloor+1}{n-1 \choose d}p^d(1-p)^{n-1-d}.
\end{align*}
In the last step, we used the fact that for a node of degree $d$ to get blue, at least $\lfloor d/2\rfloor+1$ nodes must be blue among the nodes in its neighborhood plus itself. Now, by applying $p_b=o(e^{np}/n)$, $1-x\le e^{-x}$, and ${d+1\choose \lfloor d/2\rfloor+1}\le 2^{2d}$, we have
\begin{align*}
&\mathbb{P}[\mathcal{C}_1(v)=b\wedge d(v)\ne 1]\le o(\frac{e^{np}}{n})e^{-np}e^p+\sum_{d=2}^{n}2^{2d}(\frac{e^{np}}{n})^{\lfloor d/2\rfloor+1}n^dp^de^{-np}e^{p(d+1)}\le\\
&o(\frac{1}{n})+\sum_{d=2}^{n}2^{6d}(\frac{e^{np}}{n})^{\lfloor d/2\rfloor}n^{d-1}p^d
\end{align*}
where we used $e^p\le e$ and $e^{(d+1)p}\le 2^{4d}$. Since the summation is maximized for $p=(1-\epsilon)\frac{\log n}{n}$ and $d/4\le \lfloor d/2\rfloor$ for $d\geq 2$, we have
\begin{align*}
&\mathbb{P}[\mathcal{C}_1(v)=b\wedge d(v)\ne 1]\le o(\frac{1}{n})+\sum_{d=2}^{n}\frac{1}{n^{\epsilon \lfloor d/2\rfloor}}\frac{(2^6\log n)^d}{n}\le o(\frac{1}{n})+\frac{1}{n}\sum_{d=2}^{n}(\frac{2^6\log n}{n^{\frac{\epsilon}{4}}})^d=\\
&o(\frac{1}{n})+\frac{1}{n}\sum_{d=2}^{\frac{8}{\epsilon}-1}(\frac{2^6\log n}{n^{\frac{\epsilon}{4}}})^d+\frac{1}{n}\sum_{d=\frac{8}{\epsilon}}^{n}(\frac{2^6\log n}{n^{\frac{\epsilon}{4}}})^d.
\end{align*}
The first summation is maximized for $d=2$ and the second one for $d=8/\epsilon$. Furthermore, $8/\epsilon$ is a constant. Therefore,
\begin{align*}
&\mathbb{P}[\mathcal{C}_1(v)=b\wedge d(v)\ne 1]\le o(\frac{1}{n})+ (\frac{8}{\epsilon}-2)\frac{2^{12}\log^2n}{n^{1+\frac{\epsilon}{2}}}+(n-\frac{8}{\epsilon}+1)\frac{2^{\frac{48}{\epsilon}}\log^{\frac{8}{\epsilon}}n}{n^3}=\\
&o(\frac{1}{n})+o(\frac{1}{n})+o(\frac{1}{n})=o(\frac{1}{n}).
\end{align*}
By the union bound, a.a.s. there is no blue non-leaf node in $\mathcal{C}_1$\footnote{We exclude the leaves since the expected number of blue leaves in $\mathcal{C}_1$ is equal to $n{n-1 \choose 1}p(1-p)^{n-2}p_b$, and for instance for $p=(1-\epsilon)\frac{\log n}{n}$ and $p_b=\frac{e^{np}}{n\log\log n}$ the expectation is equal to $\Omega(\log n/\log\log n)$  which is a growing function in $n$ (where we applied $e^{-x-x^2}\le 1-x$).}. Now building on this statement, we prove that $\mathcal{C}_2$ is fully red a.a.s. 

Let random variable $l(u)$ denote the number of leaves adjacent to node $u$. We know a.a.s. all non-leaf nodes are red in $\mathcal{C}_1$ (all the following probabilities are conditional on the occurrence of this event, but to lighten the notation we skip it). Thus, for an arbitrary node $u$ to be blue in $\mathcal{C}_2$, it must have at least one leaf neighbor and it also must be blue in $\mathcal{C}_0$ (otherwise, all its leaf neighbors will be red in $\mathcal{C}_1$). Let us consider two cases. First, $l(u)\geq 2$ and $\mathcal{C}_0(u)=b$. Second, $l(u)=1$, $\mathcal{C}_0(u)=b$, and $d(u)=1$, where we require $d(u)=1$ because otherwise $u$ has one or more non-leaf neighbors which are red in $\mathcal{C}_1$ and then $u$ is red in $\mathcal{C}_2$.
\begin{align*}
&\mathbb{P}[\mathcal{C}_2(u)=b]\le \mathbb{P}[\mathcal{C}_0(u)=b\wedge l(u)\ge 2]+\mathbb{P}[\mathcal{C}_0(u)=b\wedge l(u)=1 \wedge d(v)=1]\le\\
&{n-1 \choose 2}p^2(1-p)^{2n-5}p_b+{n-1 \choose 1}p(1-p)^{2n-4}p_b\le\\ &n^2p^2e^{-(2n-5)p}p_b+npe^{-(2n-4)p}p_b\le
o(\frac{1}{n})(\frac{n^2p^2}{e^{np}}+\frac{np}{e^{np}})=o(\frac{1}{n})
\end{align*}
where we used $p_b=o(\frac{e^{np}}{n})$ and $e^{5p},e^{4p}$ are upper-bounded by a constant, say $e^{5}$. Therefore by the union bound, there is no blue node in $\mathcal{C}_2$ a.a.s. which finishes the proof.
\end{proof}
\vspace{-0.35cm}
\begin{theorem}
\label{dynamo}
In $\mathcal{G}_{n,p}$ any dynamo is of size at least $(\frac{1}{2}-\frac{c}{\sqrt{np}})n$ for a large constant $c$ a.a.s.
\end{theorem}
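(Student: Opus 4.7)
Set $k:=(\tfrac{1}{2}-\tfrac{c}{\sqrt{np}})n$ and, for $S\subseteq V$, write $f(S)$ for the blue set one round after starting from the configuration in which $S$ is blue and $V\setminus S$ is red. Since to show $S$ is not a dynamo it suffices to exhibit a single configuration (with $S$ blue) whose process does not reach all-blue, I take this specific configuration as witness. The plan is then to establish the invariant: a.a.s.\ over $\mathcal{G}_{n,p}$, $|f(S)|\le k$ for every $S\subseteq V$ with $|S|\le k$. Granted the invariant, iteration gives $|f^{t}(S)|\le k<n$ for all $t\ge 0$, so the process never reaches the all-blue configuration and $S$ cannot be a dynamo.

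\textbf{Proving the invariant.} I would union-bound over pairs $(S,T)$ with $|S|\le k$ and $|T|=k+1$ and estimate $\mathbb{P}[T\subseteq f(S)]$. The key observation is that if $v\in f(S)$ then $|N(v)\cap S|\ge |N(v)\setminus S|$ (tie-breaking handles $v\in S$; strict majority handles $v\notin S$), hence summing over $v\in T$ gives $e(T,S)\ge e(T,V\setminus S)$; equivalently $Z:=e(T,V\setminus S)-e(T,S)\le 0$. Now $Z$ is a weighted sum of independent $\textrm{Bernoulli}(p)$ variables, one per potential edge, with coefficients in $\{-2,-1,0,+1,+2\}$. A direct computation based on $|S|\le k$ yields $\mathbb{E}[Z]\ge |T|\,c\sqrt{np}$ (the slack $n-2|S|\ge 2cn/\sqrt{np}$ provides exactly this bias); and since at most $|T|n$ edges carry a nonzero coefficient and every coefficient is bounded by $2$, $\mathrm{Var}(Z)\le 4|T|np$. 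Bernstein's inequality then yields $\mathbb{P}[Z\le 0]\le e^{-\Omega(|T|c^{2})}$.

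\textbf{Union bound and main obstacle.} Summing over all admissible $(S,T)$:
\[
\mathbb{P}\bigl[\exists\,S\subseteq V:\,|S|\le k,\,|f(S)|>k\bigr]\ \le\ 2^{n}\cdot\binom{n}{k+1}\,e^{-\Omega((k+1)c^{2})}\ \le\ 4^{n}\,e^{-\Omega(nc^{2})},
\]
which is $o(1)$ once $c$ is taken to be a sufficiently large absolute constant. The most delicate step I expect is the variance estimate for $Z$: I must carefully track the edges contributing $|c_{e}|=2$ (those lying entirely inside $T\cap S$ or inside $T\setminus S$) in order to keep $\mathrm{Var}(Z)=O(|T|np)$, so that Bernstein's bound decays exponentially \emph{linearly} in $|T|$; this linear-in-$|T|$ rate is precisely what allows the exponential factor to beat the $2^{n}\binom{n}{k+1}$ counting term in the union bound.
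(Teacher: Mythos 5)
Your proposal is correct and follows essentially the same route as the paper: both reduce the theorem to showing that, a.a.s., no blue set of size at most $k$ can produce more than $k$ blue nodes in a single round, via a union bound over pairs of node sets combined with concentration of the edge counts $e(T,S)$ versus $e(T,V\setminus S)$, and then iterate. The only cosmetic difference is that you apply Bernstein's inequality to the signed difference $Z=e(T,V\setminus S)-e(T,S)$ directly, whereas the paper applies two separate Chernoff bounds to $e(S',S)$ and $e(S',V\setminus S)$ and checks that the resulting thresholds are ordered correctly.
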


\begin{proof}
The main idea of the proof is similar to the dense case in Theorem~\ref{theorem 1}. It suffices to prove that a.a.s. a set of size $s=(\frac{1}{2}-\delta)n$ for $\delta=\frac{c}{\sqrt{np}}$ cannot control a set of the same size. By definition of controlling, this implies no set of size $s$ or smaller can control a set of size $s$ or larger; consequently, there is no dynamo of size $s$ or smaller. We show that the probability that an arbitrary node set of size $s$ controls a set of the same size is so small that the union bound over all possibilities yields our claim. 

Let $S,S'$ be two node sets of size $s$. we want to bound the probability that $S$ controls $S'$. Since $\mathbb{E}[e(S',S)]=s^2p$, by applying the Chernoff bound (Theorem~\ref{Chernoff} (i)) and $\delta^2=\frac{c^2}{np}$ for a sufficiently large constant $c$, we have
\[
\mathbb{P}[(1+\delta)\mathbb{E}[e(S',S)]\le e(S',S)]\leq e^{-\frac{\delta^2\mathbb{E}[e(S,S')]}{3}}= e^{-\frac{c^2s^2p}{3np}}\le e^{-2n}.
\]
Similarly, since $\mathbb{E}[e(S',V\setminus S)]=s(n-s)p$,
\[
\mathbb{P}[e(S',V\setminus S)\le (1-\delta)\mathbb{E}[e(S',V\setminus S)]]\leq e^{-\frac{c^2s(n-s)p}{2np}}\le e^{-2n}.
\]
Furthermore,
\[(1+\delta)\mathbb{E}[{e(S',S)}]=(1+\delta)(\frac{1}{2}-\delta)^2n^2p\le (1-\delta)(\frac{1}{2}+\delta)(\frac{1}{2}-\delta)n^2p= (1-\delta)\mathbb{E}[e(S',V\setminus S)].
\] 
This implies $\mathbb{P}[e(S',S)\geq e(S',V\setminus S)]\leq 2e^{-2n}$. Furthermore, $\mathbb{P}[S\ \textrm{controls}\ S']\leq \mathbb{P}[e(S',S)\geq e(S',V\setminus S)]$ because if $e(S',S)< e(S',V\setminus S)$, then there is a node in $S'$ which shares more than half of its neighbors with $V\setminus S$. Therefore, $\mathbb{P}[S\ \textrm{controls}\ S']\le 2e^{-2n}$. By the union bound, the probability that there exist sets $S,S'$ of size $s$ such that $S$ controls $S'$ is upper-bounded by $2^{2n}2e^{-2n}=o(1)$, where $2^{2n}$ is an upper bound on the number of possibilities of choosing sets $S$ and $S'$.
\end{proof}
\section{Expanders}
\label{expanders}
Roughly speaking, our main goal in this section is to show that the majority model is an ``efficient'' and ``fast'' density classifier on regular expanders. Let us first state Lemma~\ref{mixing lemma}, which is our main tool. Recall that for a graph $G$ the second-largest absolute eigenvalue of its adjacency matrix is denoted by $\lambda(G)$ (to lighten the notation we simply write $\lambda$ where $G$ is clear from the context).
\begin{lemma}
\label{mixing lemma} (Lemma 2.3 in~\cite{hoory2006expander}) In a $\Delta$-regular graph $G=(V,E)$ for any two node sets $S,S'\subseteq V$, $|e(S,S')-\frac{|S||S'|\Delta}{n}|\leq \lambda\sqrt{|S||S'|}$.
\end{lemma}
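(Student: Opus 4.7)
The plan is to establish this mixing bound via a spectral decomposition argument, the standard route. Let $A$ be the adjacency matrix of $G$. Since $G$ is $\Delta$-regular and undirected, $A$ is symmetric and the all-ones vector $\mathbf{1}$ is an eigenvector with eigenvalue $\Delta$, so there is an orthonormal eigenbasis $v_1 = \mathbf{1}/\sqrt{n}, v_2, \dots, v_n$ with real eigenvalues $\Delta = \mu_1 \ge \mu_2 \ge \cdots \ge \mu_n \ge -\Delta$, and by hypothesis $\max_{i\ge 2}|\mu_i| = \lambda$.

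Next I would observe that $e(S,S') = \mathbf{1}_S^\top A\, \mathbf{1}_{S'}$, where $\mathbf{1}_S$ and $\mathbf{1}_{S'}$ are the characteristic vectors of $S$ and $S'$. The key idea is to split each characteristic vector into its component along $\mathbf{1}$ and a piece in the orthogonal complement: write $\mathbf{1}_S = \tfrac{|S|}{n}\mathbf{1} + x$ with $x \perp \mathbf{1}$, and similarly $\mathbf{1}_{S'} = \tfrac{|S'|}{n}\mathbf{1} + y$ with $y \perp \mathbf{1}$. Expanding the bilinear form and using $A\mathbf{1} = \Delta\mathbf{1}$ together with $\mathbf{1}^\top x = \mathbf{1}^\top y = 0$, the two cross terms vanish and the leading term contributes exactly $\tfrac{|S||S'|\Delta}{n}$, leaving
\[
e(S,S') - \frac{|S||S'|\Delta}{n} \;=\; x^\top A\, y.
\]

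To control the error, I would invoke the Cauchy–Schwarz inequality $|x^\top A y| \le \|x\|\cdot\|A y\|$, and use the fact that $y$ lies in the span of $v_2,\dots,v_n$, on which $A$ acts with spectral norm at most $\lambda$ by definition of $\lambda$. This yields $\|A y\| \le \lambda\|y\|$. Finally, since $x$ and $y$ are orthogonal projections of $\mathbf{1}_S$ and $\mathbf{1}_{S'}$, their norms satisfy $\|x\|^2 \le \|\mathbf{1}_S\|^2 = |S|$ and $\|y\|^2 \le |S'|$, giving the desired bound $\lambda\sqrt{|S||S'|}$.

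The proof is short and essentially mechanical once the spectral framework is set up; the main conceptual step, and the only part that requires a little care, is recognizing that $\lambda$ controls $A$ exactly on the subspace $\mathbf{1}^\perp$, which is why projecting away the constant components is the right move. A minor subtlety is that $S$ and $S'$ may intersect, so edges with both endpoints in $S\cap S'$ are counted twice in the definition of $e(S,S')$ used throughout the paper; this is exactly consistent with the quadratic form $\mathbf{1}_S^\top A \mathbf{1}_{S'}$, so no adjustment is needed.
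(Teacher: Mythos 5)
Your proof is correct and complete: it is the standard spectral proof of the expander mixing lemma via decomposing the characteristic vectors into their components along $\mathbf{1}$ and in $\mathbf{1}^\perp$, and you correctly note that the paper's convention of counting ordered pairs in $e(S,S')$ matches the bilinear form $\mathbf{1}_S^\top A\,\mathbf{1}_{S'}$. The paper itself gives no proof, importing the statement directly from the cited survey of Hoory, Linial, and Wigderson, and the argument there is exactly the one you give, so there is nothing to reconcile.
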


In the above lemma, the left-hand side is roughly the deviation between the number of edges among $S$ and $S'$ in $G$ and the expected number of edges among $S$ and $S'$ in the random graph $\mathcal{G}_{n,\Delta/n}$ on the node set $V$. A small $\lambda$ (i.e., good expansion) implies that this deviation is small, so the graph is nearly random in this sense; in other words, the number of edges between any two node sets is almost completely determined by their cardinality. Intuitively, this implies in the majority model the number of blue nodes that a blue set can create in the next round is proportional to its size. We phrase this argument more formally in Lemma~\ref{lemma 1} and Lemma~\ref{lemma 2}. 
\begin{lemma}
\label{lemma 1}
In the majority model and $\Delta$-regular graph $G$, if $|B(t)|\leq (\frac{1}{2}-\frac{2\lambda}{\Delta})n$ then $|B(t+1)|\leq \frac{n}{4}$.
\end{lemma}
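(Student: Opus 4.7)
The plan is to use the expander mixing lemma (Lemma~\ref{mixing lemma}) to control the number of edges between $B(t)$ and $B(t+1)$, against a simple lower bound that comes from each node in $B(t+1)$ needing at least half of its neighbors to be in $B(t)$. Since $G$ is $\Delta$-regular, every $v\in B(t+1)$ must see at least $\Delta/2$ blue neighbors in round $t$ (either strict majority, or a tie with $v$ already blue); hence
\[
e(B(t+1),B(t)) \;\geq\; \tfrac{\Delta}{2}\,|B(t+1)|.
\]

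On the other hand, applying Lemma~\ref{mixing lemma} with $S=B(t+1)$ and $S'=B(t)$ gives
\[
e(B(t+1),B(t)) \;\leq\; \frac{|B(t+1)|\,|B(t)|\,\Delta}{n} + \lambda\sqrt{|B(t+1)|\,|B(t)|}.
\]
Writing $s=|B(t)|$ and $s'=|B(t+1)|$, combining the two displays and dividing by $\sqrt{s'}$ (we may assume $s'>0$) yields
\[
\sqrt{s'}\;\Delta\left(\tfrac{1}{2}-\tfrac{s}{n}\right) \;\leq\; \lambda\sqrt{s}.
\]
The hypothesis $s\leq(\tfrac{1}{2}-\tfrac{2\lambda}{\Delta})n$ is exactly what is needed to guarantee that the coefficient on the left is at least $2\lambda$, so the inequality simplifies to $\sqrt{s'}\leq \tfrac{1}{2}\sqrt{s}$, i.e.\ $s'\leq s/4$. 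Since $s\leq n/2$, this gives $s'\leq n/8\leq n/4$ as claimed.

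There is not really a main obstacle here; the only subtle points are (i) arguing that every $v\in B(t+1)$ contributes at least $\Delta/2$ edges into $B(t)$ even when the tie-breaking rule is used (handled by the observation that a tied node must already be blue), and (ii) choosing exactly the constant $\tfrac{2\lambda}{\Delta}$ in the hypothesis so that after cancellation with the factor $\Delta$, one is left with a clean $2\lambda\sqrt{s'}\leq \lambda\sqrt{s}$. These make the algebra fall out, and the calculation even gives the slightly stronger bound $|B(t+1)|\leq |B(t)|/4$, which is presumably what the authors will iterate in the follow-up Lemma~\ref{lemma 2}.
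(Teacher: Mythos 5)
Your proof is correct, and it reaches the stated conclusion (in fact a slightly stronger one). The route differs from the paper's in a small but genuine way. The paper starts from the observation that every node of $B(t+1)$ has at least as many neighbours in $B(t)$ as in $R(t)$, writes this as $e(B(t+1),R(t))\le e(B(t+1),B(t))$, and then applies Lemma~\ref{mixing lemma} \emph{twice} --- as a lower bound on the left-hand side and an upper bound on the right-hand side --- before using $|R(t)|-|B(t)|\ge \frac{4\lambda}{\Delta}n$ and $\sqrt{|B(t)|}+\sqrt{|R(t)|}\le 2\sqrt n$ to conclude $|B(t+1)|\le n/4$. You instead use regularity to convert the same membership condition into the exact count $e(B(t+1),B(t))\ge \frac{\Delta}{2}|B(t+1)|$ and apply the mixing lemma only once, as an upper bound; in a $\Delta$-regular graph the two starting points are equivalent since $e(B(t+1),B(t))+e(B(t+1),R(t))=\Delta|B(t+1)|$. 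Your version is essentially the paper's proof of Lemma~\ref{lemma 2} run under the hypothesis of Lemma~\ref{lemma 1}, and it buys you the sharper conclusion $|B(t+1)|\le |B(t)|/4$, which already exhibits the geometric decay that the paper only extracts later via Lemma~\ref{lemma 2}. The only minor points to keep explicit are the ones you already flag: $|B(t+1)|>0$ before dividing by $\sqrt{|B(t+1)|}$, and $\lambda>0$ when cancelling it from $2\lambda\sqrt{|B(t+1)|}\le\lambda\sqrt{|B(t)|}$ (the case $\lambda=0$ is degenerate and the paper's own proof has the same implicit assumption).
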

\begin{proof}
For each node in $B(t+1)$, the number of neighbors in $B(t)$ is at least as large as the number of neighbors in $R(t)$, which implies $e(B(t+1),R(t))\le e(B(t+1),B(t))$. Now, by applying Lemma~\ref{mixing lemma} to both sides of the inequality, we have
\[
\frac{|B(t+1)||R(t)|\Delta}{n}-\lambda\sqrt{|B(t+1)||R(t)|}\leq \frac{|B(t+1)||B(t)|\Delta}{n}+\lambda\sqrt{|B(t+1)||B(t)|}. 
\]
Dividing by $\sqrt{|B(t+1)|}$ and re-arranging the terms give
\[
\sqrt{|B(t+1)|}(|R(t)|-|B(t)|)\leq \frac{\lambda n}{\Delta}(\sqrt{|B(t)|}+\sqrt{|R(t)|}).
\] 
Since $|R(t)|-|B(t)|\geq \frac{4\lambda}{\Delta}n$ and $\sqrt{|B(t)|}+\sqrt{|R(t)|}\leq 2\sqrt{n}$, we have
\[
|B(t+1)|\frac{16\lambda^2n^2}{\Delta^2}\leq \frac{\lambda^2n^2}{\Delta^2}4n\Rightarrow |B(t+1)|\leq \frac{n}{4}.
\]
\end{proof}
\begin{lemma}
\label{lemma 2}
In the majority model and $\Delta$-regular graph $G$, $|B(t)|\leq \frac{n}{4}$ implies $|B(t+1)|\leq 16\frac{\lambda^2}{\Delta^2}|B(t)|$. 
\end{lemma}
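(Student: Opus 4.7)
The plan is to mimic the structure of Lemma~\ref{lemma 1} but with a stronger local observation: instead of only comparing $e(B(t+1),R(t))$ to $e(B(t+1),B(t))$, I would exploit that every node in $B(t+1)$ must have a \emph{large} blue majority among its neighbors, giving a lower bound on $e(B(t+1),B(t))$ that is linear in $|B(t+1)|\Delta$.

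More concretely, first observe that for any $v\in B(t+1)$, the majority rule (including the tie-breaking convention, which only helps) forces $v$ to have at least $\Delta/2$ neighbors in $B(t)$. Summing over all such $v$ gives the lower bound
\[
e(B(t+1),B(t))\ \ge\ \frac{|B(t+1)|\,\Delta}{2}.
\]
Next, I would apply the mixing lemma (Lemma~\ref{mixing lemma}) in the upper direction to the same quantity:
\[
e(B(t+1),B(t))\ \le\ \frac{|B(t+1)|\,|B(t)|\,\Delta}{n}\ +\ \lambda\sqrt{|B(t+1)|\,|B(t)|}.
\]
Combining these two inequalities and using the hypothesis $|B(t)|\le n/4$, the first term on the right is at most $|B(t+1)|\Delta/4$, which can be absorbed into the left-hand side to yield
\[
\frac{|B(t+1)|\,\Delta}{4}\ \le\ \lambda\sqrt{|B(t+1)|\,|B(t)|}.
\]
Dividing by $\sqrt{|B(t+1)|}$ (assuming it is positive, else the claim is trivial) and squaring gives the desired inequality $|B(t+1)|\le 16(\lambda/\Delta)^{2}|B(t)|$.

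The derivation is short and the only subtle point is the first step: one has to be careful that the lower bound $|N_{B(t)}(v)|\ge\Delta/2$ holds uniformly for every $v\in B(t+1)$, including nodes that were already blue in $\mathcal{C}_t$ and kept their color through a tie. Once this is settled, the rest is routine algebra: the mixing lemma gives the matching upper bound, and the regime $|B(t)|\le n/4$ is exactly what is needed so that the ``expected'' term $|B(t+1)||B(t)|\Delta/n$ is dominated by $|B(t+1)|\Delta/2$, leaving the spectral error term $\lambda\sqrt{|B(t+1)||B(t)|}$ to control the size of $B(t+1)$.
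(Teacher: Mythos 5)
Your proposal is correct and follows essentially the same route as the paper's proof: the lower bound $e(B(t+1),B(t))\ge |B(t+1)|\Delta/2$ (which indeed holds even for tie-kept blue nodes), the expander mixing lemma as an upper bound on the same quantity, and the absorption of the main term using $|B(t)|\le n/4$. The paper merely arranges the algebra slightly differently (dividing by $\sqrt{|B(t+1)|}$ before invoking $|B(t)|/n\le 1/4$), arriving at the identical conclusion.
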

\begin{proof}
Since each node in $B(t+1)$ must have at least $\Delta/2$ neighbors in $B(t)$, we have $\frac{|B(t+1)|\Delta}{2}\le e(B(t+1),B(t))$. Applying Lemma~\ref{mixing lemma} to the right side of the inequality gives
\[
\frac{|B(t+1)|\Delta}{2}\le \frac{|B(t+1)||B(t)|\Delta}{n}+\lambda\sqrt{|B(t+1)||B(t)|}\Rightarrow 
\]
\[
\sqrt{|B(t+1)|}(1-\frac{2|B(t)|}{n})\le \frac{2\lambda}{\Delta}\sqrt{|B(t)|}.
\]
Now, utilizing $\frac{|B(t)|}{n}\leq \frac{1}{4}$ and taking the square of both sides of the equation imply $|B(t+1)|\leq 16\frac{\lambda^2}{\Delta^2}|B(t)|$.
\end{proof}
Putting Lemma~\ref{lemma 1} and Lemma~\ref{lemma 2} together immediately provides Theorem~\ref{theorem 2}.
\begin{theorem}
\label{theorem 2}
In the majority model and $\Delta$-regular graph $G$, if $|B(0)|\leq (\frac{1}{2}-\frac{2\lambda}{\Delta})n$ then the process gets fully red in $\mathcal{O}(\log_{\Delta^2/\lambda^2}n)$ rounds.
\end{theorem}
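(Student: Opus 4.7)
The plan is to chain the two lemmas: use Lemma~\ref{lemma 1} once to drop the blue density from $(\tfrac12-\tfrac{2\lambda}{\Delta})n$ down to $n/4$, and then iterate Lemma~\ref{lemma 2} to obtain geometric shrinkage with contraction ratio $16\lambda^2/\Delta^2$. The assumption that $\lambda/\Delta$ is ``sufficiently small'' is precisely the requirement $16\lambda^2/\Delta^2 < 1$, without which the second step does not contract.

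Concretely, from $|B(0)|\le (\tfrac12-\tfrac{2\lambda}{\Delta})n$, Lemma~\ref{lemma 1} yields $|B(1)|\le n/4$. An induction using Lemma~\ref{lemma 2} then gives
\[
|B(1+t)|\le \Bigl(\tfrac{16\lambda^2}{\Delta^2}\Bigr)^{t}\cdot \tfrac{n}{4}
\]
for every $t\ge 0$ (note the hypothesis $|B(t)|\le n/4$ for Lemma~\ref{lemma 2} is preserved along the iteration, since the right-hand side is non-increasing in $t$).

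Since $|B(t)|$ is an integer, as soon as the right-hand side drops strictly below $1$ we must have $|B(1+t)|=0$, i.e.\ the configuration is fully red. Solving $\bigl(\tfrac{16\lambda^2}{\Delta^2}\bigr)^{t}\cdot \tfrac{n}{4} < 1$ for $t$ gives
\[
t > \frac{\log(n/4)}{\log(\Delta^2/(16\lambda^2))} = \mathcal{O}\!\bigl(\log_{\Delta^2/\lambda^2} n\bigr),
\]
where the last equality uses $\log(\Delta^2/(16\lambda^2)) = \Theta(\log(\Delta^2/\lambda^2))$ when $\Delta^2/\lambda^2$ is bounded away from~$16$. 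Adding the single initial round from Lemma~\ref{lemma 1} does not change the asymptotics, so the process terminates in $\mathcal{O}(\log_{\Delta^2/\lambda^2} n)$ rounds.

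There is no real obstacle here beyond bookkeeping; the only subtlety is noting that (i) Lemma~\ref{lemma 2}'s hypothesis $|B(t)|\le n/4$ continues to hold along the iteration, and (ii) the theorem is only meaningful in the regime $\lambda/\Delta < 1/4$ (so that $\Delta^2/(16\lambda^2)>1$), which is implicit in the asymptotic notation $\mathcal{O}(\log_{\Delta^2/\lambda^2} n)$.
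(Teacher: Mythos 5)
Your proposal is correct and is essentially the paper's own argument: the paper states that Theorem~\ref{theorem 2} follows immediately by combining Lemma~\ref{lemma 1} (one round to reach $n/4$ blue nodes) with iterated applications of Lemma~\ref{lemma 2} (geometric contraction by $16\lambda^2/\Delta^2$). Your additional bookkeeping---preservation of the $|B(t)|\le n/4$ hypothesis, the integrality argument for termination, and the implicit requirement $\lambda/\Delta<1/4$---only makes explicit what the paper leaves to the reader.
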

\begin{corollary}
In the majority model and $\Delta$-regular graph $G$ with $\lambda(G)=o(\Delta)$, $|B(0)|\le (\frac{1}{2}-\delta)n$ for an arbitrary constant $\delta>0$ results in fully red configuration in sub-logarithmically many rounds.  
\end{corollary}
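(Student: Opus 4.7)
The plan is to derive this corollary directly from Theorem~\ref{theorem 2} together with a short asymptotic estimate. Fix the arbitrary constant $\delta>0$. The hypothesis $\lambda(G)=o(\Delta)$ means that along the sequence of graphs under consideration, $\lambda/\Delta \to 0$, so for all sufficiently large $n$ we have $2\lambda/\Delta \le \delta$. This yields $(\frac{1}{2}-\delta)n \le (\frac{1}{2}-\frac{2\lambda}{\Delta})n$, and hence the initial configuration satisfies $|B(0)|\le (\frac{1}{2}-\frac{2\lambda}{\Delta})n$. We can therefore apply Theorem~\ref{theorem 2} directly to conclude that the process reaches the fully red configuration in $\mathcal{O}(\log_{\Delta^2/\lambda^2} n)$ rounds.

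It then remains to verify that this bound is sub-logarithmic in $n$. Rewriting the bound,
\[
\log_{\Delta^2/\lambda^2} n \;=\; \frac{\log n}{\log(\Delta^2/\lambda^2)} \;=\; \frac{\log n}{2\log(\Delta/\lambda)}.
\]
Since $\lambda=o(\Delta)$ implies $\Delta/\lambda \to \infty$, we have $\log(\Delta/\lambda) \to \infty$, and so $\log_{\Delta^2/\lambda^2} n = o(\log n)$. This is exactly the sub-logarithmic round count claimed.

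There is essentially no technical obstacle here: the corollary is bookkeeping on top of Theorem~\ref{theorem 2}, whose real content lies in Lemmas~\ref{lemma 1} and~\ref{lemma 2} (the two-phase contraction argument based on the expander mixing lemma). The only minor point to be careful about is that the asymptotic regime is taken along a sequence of $\Delta$-regular graphs in which $\lambda/\Delta \to 0$; this is precisely what the hypothesis $\lambda(G)=o(\Delta)$ encodes, and it simultaneously delivers both the condition $2\lambda/\Delta \le \delta$ needed to invoke Theorem~\ref{theorem 2} and the divergence $\log(\Delta/\lambda) \to \infty$ needed for sub-logarithmic convergence.
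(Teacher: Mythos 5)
Your proposal is correct and is exactly the argument the paper intends: the corollary is stated as an immediate consequence of Theorem~\ref{theorem 2}, obtained by noting that $\lambda=o(\Delta)$ makes $2\lambda/\Delta\le\delta$ for large $n$ and makes $\log_{\Delta^2/\lambda^2}n=\log n/(2\log(\Delta/\lambda))=o(\log n)$. Nothing is missing.
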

So far we proved our desired density classification property of the majority model on regular expanders. Now, we discuss that combining these results with some prior works yields some very interesting propositions, in particular solving an open problem by Peleg~\cite{peleg2014immunity}. 

The random $\Delta$-regular graph $\mathcal{G}_{n}^{\Delta}$ is the random graph with a uniform distribution
over all $\Delta$-regular graphs on $n$ vertices, say $[n]$. It is known~\cite{friedman2003proof} that a.a.s. $\lambda(\mathcal{G}_{n}^{\Delta})=\mathcal{O}(\sqrt{\Delta})$ for $\Delta\geq 3$. Therefore, Theorem~\ref{theorem 2} implies that in the majority model on $\mathcal{G}_{n}^{\Delta}$, if $|B(0)|\le (\frac{1}{2}-\frac{c}{\sqrt{\Delta}})n$ for some large constant $c$ then the process gets fully red a.a.s. This result is already known by G\"artner and Zehmakan~\cite{gartner2018majority}, however with a much more involved proof.

Recall that a graph is $(\alpha,\beta)$-immune if any node set of size $s\le \beta n$ controls at most $\alpha s$ nodes, and it is asymptotically optimally immune if it is $(\frac{c_2}{\Delta},\beta)$-immune for some constants $c_2,\beta>0$. As argued in the introduction, by~\cite{peleg2014immunity,gartner2018majority} we know that for any $\Delta>c_1$ for some constant $c_1$, there exists an asymptotically optimally immune $\Delta$-regular graph. However, it would be interesting to construct such graphs explicitly.
For $\Delta\ge \sqrt{n}$, Peleg~\cite{peleg2014immunity} established explicit
constructions by using structures for symmetric block designs, and he left the case of $\Delta<\sqrt{n}$ as an open problem. We settle this problem by exploiting a large family of regular Cayley graphs, called Ramanujan graphs. A $\Delta$-regular graph $G$ is \emph{Ramanujan} if $\lambda(G)=\sqrt{2\Delta-1}$. Ramanujan graphs are ``optimal'' expanders because Alon and Boppana~\cite{alon1986eigenvalues} proved that for a $\Delta$-regular graph $G$, $\lambda(G)\geq \sqrt{2\Delta-1}-o(1)$. Thus, Lemma~\ref{lemma 2} implies that for any $\Delta$-regular Ramanujan graph a node set of size $s\leq \frac{n}{4}$ can control at most $\frac{16\lambda^2}{\Delta^2}s= \frac{16(2\Delta-1)}{\Delta^2}s\le \frac{32}{\Delta}s$ nodes. This means that any $\Delta$-regular Ramanujan graph is $(\frac{1}{4},\frac{32}{\Delta})$-immune; i.e., it is asymptotically optimally immune.

\begin{theorem}
\label{theorem 3}
All regular Ramanujan graphs are asymptotically optimally immune.
\end{theorem}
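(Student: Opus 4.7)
The plan is to read off the theorem directly from Lemma~\ref{lemma 2} by plugging in the Ramanujan bound $\lambda(G)\le\sqrt{2\Delta-1}$. Recall that asymptotic optimal immunity of a $\Delta$-regular graph means exhibiting constants $c_2,\beta>0$ so that every node set $S$ of size $s\le\beta n$ controls at most $(c_2/\Delta)\,s$ nodes in one round of the majority model. So the goal is to produce such constants for an arbitrary $\Delta$-regular Ramanujan graph $G$.

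First I would fix a set $S$ of size $s\le n/4$ and regard it as $B(t)$ in the majority-model notation; the set it controls is then contained in $B(t+1)$ by the definition of controlling. Lemma~\ref{lemma 2} applies (its hypothesis $|B(t)|\le n/4$ is exactly satisfied) and gives $|B(t+1)|\le 16(\lambda/\Delta)^2\,s$. Using $\lambda\le\sqrt{2\Delta-1}$ from the Ramanujan property, I would simplify
\[
16\,\frac{\lambda^{2}}{\Delta^{2}}\;\le\;\frac{16(2\Delta-1)}{\Delta^{2}}\;\le\;\frac{32}{\Delta},
\]
so any $S$ of size at most $n/4$ controls at most $(32/\Delta)\,s$ nodes. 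Thus $G$ is $(32/\Delta,\,1/4)$-immune, which is asymptotic optimal immunity with the absolute constants $c_2=32$ and $\beta=1/4$ (independent of $n$ and $\Delta$).

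There is no real obstacle: the work was already done in Lemma~\ref{mixing lemma} and Lemma~\ref{lemma 2}, and the Alon--Boppana lower bound $\lambda\ge\sqrt{2\Delta-1}-o(1)$ confirms that the parameter $\lambda=\sqrt{2\Delta-1}$ used here is essentially the best one can hope for among $\Delta$-regular graphs, which is why one obtains the optimal $1/\Delta$ scaling. The only thing to note is that the argument is meaningful for all sufficiently large $\Delta$, since $32/\Delta<1$ is needed for the bound $(c_2/\Delta)s$ to beat the trivial bound $s$; this matches the ``$\Delta>c_1$'' assumption inherent in the definition of asymptotic optimal immunity.
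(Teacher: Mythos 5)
Your proposal is correct and follows exactly the paper's argument: apply Lemma~\ref{lemma 2} with the Ramanujan bound $\lambda\le\sqrt{2\Delta-1}$ to conclude that any set of size $s\le n/4$ controls at most $\frac{16(2\Delta-1)}{\Delta^2}s\le\frac{32}{\Delta}s$ nodes, giving $(\frac{32}{\Delta},\frac{1}{4})$-immunity. (You even state the pair in the order consistent with the paper's definition of $(\alpha,\beta)$-immunity, whereas the paper's own text writes the two parameters in swapped order.)
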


Lubotzky, Phillips, and Sarnak~\cite{lubotzky1988ramanujan} showed that arbitrarily large $\Delta$-regular Ramanujan graphs exist when $\Delta-1$ is prime, and moreover they can be explicitly constructed (see also~\cite{marcus2015interlacing,morgenstern1994existence}). This result plus Theorem~\ref{theorem 3} answer the aforementioned question by Peleg.

Finally, as we argued regularity and expansion are sufficient properties for efficient density classification, but a natural question arises: are they also necessary? Some certain level of expansion seems to be needed for a graph to show such a density classification behavior under the majority model because otherwise there can exist a relatively small subset $S$ such that each node in $S$ has at least half of its neighbors in $S$; clearly, if $S$ is fully blue initially, it stays blue forever, even though all the remaining nodes are red. Regarding regularity, if the graph is not regular but almost regular, that is the minimum degree and the maximum degree differ by a constant factor, then the same proof ideas provide similar results. However, large degree gaps can lead into the state where a small subset of nodes of large degrees controls a large set of nodes of small degrees, which is in contrast with density classification. All in all, this would be an interesting question to be addressed rigorously in future work. 

\bibliography{ref}
\end{document}